\newtheorem{theorem}{Theorem}
\theoremstyle{definition}
\theoremstyle{remark}
\begin{document}

\begin{titlepage}
\begin{center}
    \vspace{0.5cm}

    \LARGE\textbf{Energy translation symmetries and dynamics of separable autonomous two-dimensional ODEs}
       
    \normalsize
        
    \vspace{1.0cm}
    \setcounter{footnote}{0}
    \setlength{\footnotemargin}{0.8em}
    {\large Johannes G. Borgqvist\footnote{Corresponding author. E-mail: borgqvist@maths.ox.ac.uk}\footnote{\label{Oxford}Wolfson Centre for Mathematical Biology, Mathematical Institute, University of Oxford, United Kingdom}, Fredrik Ohlsson\footnote{\label{Umeå}Department of Mathematics and Mathematical Statistics, Umeå University, Sweden}, and Ruth E. Baker\footref{Oxford}}
    \setlength{\footnotemargin}{1.8em}
        
    \vspace{1cm}
        
    \begin{abstract}
    We study symmetries in the phase plane for separable, autonomous two-state systems of ordinary differential equations (ODEs). We prove two main theoretical results concerning the existence and non-triviality of two orthogonal symmetries for such systems. In particular, we show that these symmetries correspond to translations in the internal energy of the system, and describe their action on solution trajectories in the phase plane. In addition, we apply recent results establishing how phase plane symmetries can be extended to incorporate temporal dynamics to these energy translation symmetries. Subsequently, we apply our theoretical results to the analysis of three models from the field of mathematical biology: a canonical biological oscillator model, the Lotka--Volterra (LV) model describing predator-prey dynamics, and the SIR model describing the spread of a disease in a population. We describe the energy translation symmetries in detail, including their action on biological observables of the models, derive analytic expressions for the extensions to the time domain, and discuss their action on solution trajectories. 
    \end{abstract}
    
    \vspace{0.5cm}
    
    \textbf{Keywords:}\\ Lie symmetries, phase plane symmetries, canonical coordinates, mathematical biology. \\      
\end{center}
\end{titlepage}
   
\setcounter{footnote}{0}
\renewcommand{\thefootnote}{\arabic{footnote}}

\numberwithin{equation}{section}

\section{Introduction}
In the most straightforward case, minimal biological models are analysed by means of a linear stability analysis in the phase plane~\cite{murray2002}. Typically, this approach entails analysing a time-invariant system of two first order ODEs in the phase plane, which is the plane spanned by the two states of the model, to provide qualitative information about the long-term dynamics of the system. However, such analysis does not provide quantitative insight into the relationship between different solution trajectories of the same model and it cannot identify common structural properties of the solution trajectories. Even in the simplest case of separable models in the phase plane which are directly solvable, albeit usually implicitly, these questions about the properties of the model, including the relationship between different solution trajectories, cannot be answered by a linear stability analysis. Symmetry methods, however, provide a versatile and generalisable set of mathematical tools for answering these types of questions. They have been used with huge success in theoretical physics but are not yet widely used in mathematical biology. 

For coupled systems of two or more first order ODEs there are few systematic methods for finding Lie symmetries. For this type of systems, the \textit{linearised symmetry conditions} that must be solved in order to find the symmetries are generally undetermined, making the unknown functions defining the generators of symmetries, referred to as the \textit{infinitesimals}~\cite{bluman1989symmetries}, difficult to find. Typically, the equations must therefore be solved using ans\"{a}tze for the infinitesimals but in general the structure of the infinitesimals is unknown rendering the task of designing the ans\"{a}tze challenging.

For a single first order ODE, Cheb--Terrab and Kolokolnikov~\cite{cheb2003first} have designed a set of ans\"{a}tze for the infinitesimals that are capable of finding symmetries for a large class of models. Moreover, certain systems of first order ODEs can be formulated as a single higher order ODE~\cite{harrington2017reduction}, for which the linearised symmetry condition decomposes and can be systematically solved for the infinitesimals. Examples where the latter strategy is applicable are the Lorenz model~\cite{sen1990lie} and various models describing disease transmission in epidemiology~\cite{nucci2005using}. Nonetheless, many systems of first order ODEs cannot be cast as a single higher order ODE and accordingly symmetries of these systems cannot be determined using this methodology.

Recently, it was shown by Ohlsson et al.~\cite{ohlsson2022geometry} that it is possible to extend the symmetries of the single phase plane ODE corresponding to a system of two autonomous ODEs to symmetries of the time-dependent system by solving the so-called \textit{lifting condition}. If the symmetry condition for the phase plane ODE is easier to solve than those of the original system, this connection provides an alternative approach for determining time domain symmetries of autonomous two-dimensional systems.

However, finding symmetries of a dynamical system is only half the battle if the goal is to improve our understanding of it. In order to provide meaningful information about the structure of the system and a powerful way to represent its dynamics, the ability to interpret the action of the symmetries is also paramount. 


In this paper, we address the challenges of finding symmetries and using them to derive insight into the dynamics of the particular class of models where the phase plane ODE is separable. Specifically, we present two main theoretical results establishing the existence of two non-trivial symmetry generators of \textit{any} separable phase plane ODE (Theorem \ref{thm:separable}) and their interpretations in terms of translations of the internal energy of phase plane trajectories (Theorem \ref{thm:interpretation_separable_symmetries}). Using the methodology developed in~\cite{ohlsson2022geometry} we also provide the explicit form of the lifting condition for these symmetries. The symmetries  are connected to the Hamiltonian structure of the dynamical system through their action on the space of solutions; raising or lowering the energy of a trajectory.

We exemplify the analysis based on our theoretical developments using three biological models: a canonical biological oscillator, the Lotka--Volterra (LV) model describing predator-prey dynamics, and the SIR model describing the spread of a disease in a population. We construct the non-trivial energy translation symmetries and provide analytic solutions for the corresponding lifting conditions, extending them to incorporate the temporal dynamics of the models. Furthermore, we discuss the action of the symmetries on biologically meaningful observables and the structure of the space of solutions.
\section{Orthogonal phase plane symmetries of separable models that act as translations in the internal energy}
\label{sec:orthogonal_symmetries}
Consider the autonomous two-state model
\begin{equation}
\label{eq:sys_auto}
    \frac{\mathrm{d}u}{\mathrm{d}t}=\dot{u}=\omega_u(u,v)\,,\quad\frac{\mathrm{d}v}{\mathrm{d}t}=\dot{v}=\omega_v(u,v)\,,
\end{equation}
and its corresponding $(u,v)$ phase plane representation given by
\begin{equation}
\label{eq:sys_phase_plane}
    \frac{\mathrm{d}v}{\mathrm{d}u}=v'=\Omega(u,v)=\frac{\omega_v(u,v)}{\omega_u(u,v)},\quad\omega_u(u,v)\neq 0.
\end{equation}
For separable ODEs the \textit{reaction terms} are of the form
\begin{equation}
\label{eq:sep_reac}
    \omega_u(u,v)=f_u(u)g_u(v)h(u,v)\,,\quad\,\omega_v(u,v)=f_v(u)g_v(v)h(u,v) ,
\end{equation}
where $f_u,f_v,g_u,g_v,h$ are continuous functions, and for $f_u(u)\neq 0\,, g_v(v)\neq 0$ the phase plane ODE in Eq.~\eqref{eq:sys_phase_plane} can be directly integrated to give
\begin{equation}
\label{eq:internal_energy}
    H=\int\frac{g_u(v)}{g_v(v)}\mathrm{d} v-\int\frac{f_v(u)}{f_u(u)}\mathrm{d} u .
\end{equation}
The arbitrary integration constant $H$ parametrises the space of solutions of Eq.~\eqref{eq:sys_phase_plane} and is often interpreted as the \textit{internal energy} of a trajectory since it is a first integral of Eq.~\eqref{eq:sys_auto} and therefore conserved under time evolution of the original system.

\subsection{Symmetries in the time domain}
Consider the space $M_3$ parametrised by $(t,u,v)$, which we will refer to as the \textit{time domain}, and let $\Gamma_{3,\epsilon} : M_3 \to M_3$ be a family of Lie point transformations parametrised by $\epsilon$, and generated by the vector field $X = \xi(t,u,v)\partial_t + \eta_u(t,u,v)\partial_u + \eta_v(t,u,v)\partial_v$~\cite{bluman1989symmetries,hydon2000symmetry}. A transformation $\Gamma_{3,\epsilon}$ constitutes a \textit{symmetry} of the system in Eq.~\eqref{eq:sys_auto} if $X$ satisfies the \textit{linearised symmetry conditions}~\cite{bluman1989symmetries,hydon2000symmetry,olver2000applications,stephani1989differential}:
\begin{equation}
\label{eqn:sym_con_ODE}
    \left. X^{(1)}\left( \dot{u} - \omega_u \right)\right|_{\dot{u} =\omega_u,\dot{v} =\omega_v} = 0\, \,,\quad \left. X^{(1)}\left( \dot{v} - \omega_v \right)\right|_{\dot{u} =\omega_u,\dot{v} =\omega_v} = 0\,,
\end{equation}
where the first prolongation $X^{(1)}$ of the generator is given by
\begin{equation}
\label{eqn:prolonged_vector_J5}
    X^{(1)} = X + \eta_u^{(1)}(t,u,v)\partial_{\dot{u}} + \eta_v^{(1)}(t,u,v)\partial_{\dot{v}} , \quad 
\end{equation} 
and the prolonged infinitesimals $\eta_u^{(1)},\eta_v^{(1)}$ are calculated using the total derivative $D_t = \partial_t + \dot{u}\partial_u + \dot{v}\partial_v$ according to
\begin{equation}
\label{eqn:prolonged_infinitessimals}
    \eta_u^{(1)} = D_t\eta_u - \dot{u}D_t\xi \,, \quad \eta_v^{(1)} = D_t\eta_v - \dot{v}D_t\xi\,.
\end{equation}
The symmetry $\Gamma_{3,\epsilon}$ is \textit{non-trivial} if it acts non-trivially on the space of solutions to Eq.~\eqref{eq:sys_auto}.

\subsection{Symmetries in the phase plane and their lift to the time domain}
Now, we consider the ODE in Eq.~\eqref{eq:sys_phase_plane} where the variables $u$ and $v$ parameterise the two-dimensional phase plane $M_2$. In analogy with the time domain, let $\Gamma_{2,\epsilon} : M_2 \to M_2$ be a Lie point transformation which is generated by the vector field $Y=\zeta_u(u,v)\partial_u + \zeta_v(u,v)\partial_v$. Then, $\Gamma_{2,\epsilon}$ is a symmetry of the phase plane ODE in Eq.~\eqref{eq:sys_phase_plane} if it satisfies the linearised symmetry condition
\begin{equation}
   \left. Y^{(1)}\left( v' - \Omega \right) \right|_{v' =\Omega} = 0 ,
\label{eq:sym_con_phase}    
\end{equation}
where the the first prolongation of the infinitesimal phase plane generator is given by
\begin{equation}
    Y^{(1)} = Y + \zeta_v^{(1)}(u,v)\partial_{v'}\,,
    \label{eq:prolonged_vector_J3}
\end{equation}
and
\begin{equation}
    \zeta_v^{(1)} = D_u\zeta_v - v'D_u\zeta_u\,,
\end{equation}
is the prolonged infinitesimal defined by the total derivative $D_u = \partial_u + v'\partial_v\,$ in the $(u,v)$ phase plane. As in the time domain, a symmetry transformation $\Gamma_{2,\epsilon}$ is non-trivial if it acts non-trivially on the space of solutions to Eq.~\eqref{eq:sys_phase_plane}.

The action of a phase plane symmetry generated by $Y$ is made manifest by introducing the \textit{canonical coordinates}~\cite{hydon2000symmetry} $(s,r)$ in the phase plane, defined by $Ys = 1$ and $Yr = 0$, in terms of which the transformation $\Gamma_{2,\epsilon}$ becomes
\begin{equation}
    \Gamma_{2,\epsilon}:\left(s,r\right)\mapsto\left(s+\epsilon,r\right).
    \label{eq:symmetry_canonical}
\end{equation}
The two canonical coordinates $(s,r)$ can be interpreted as properties of the model in Eq.~\eqref{eq:sys_phase_plane} that are \textit{changed} and \textit{conserved}, respectively, under the action of the symmetry. In fact, $r$ is an example of a \textit{differential invariant} of the generator $Y$~\cite{bluman1989symmetries,hydon2000symmetry}.

Recently, it was shown by Ohlsson et al.~\cite{ohlsson2022geometry} that it is possible to lift an infinitesimal generator $Y$ of a symmetry in the phase plane to an infinitesimal generator $X$ of a corresponding symmetry $\Gamma_{3,\epsilon}$ in the time domain. More precisely, the infinitesimal generator is given by $X=\xi(t,u,v)\partial_t+Y$ where the time infinitesimal $\xi$ solves the \textit{lifting condition}~\cite{ohlsson2022geometry}
\begin{equation}
\label{eq:lifting_condition}
    \partial_t \xi + \omega_u \partial_u \xi + \omega_v \partial_v \xi =  \frac{1}{\omega_u} \left( \vphantom{\frac{1}{\omega_u}} \left( \omega_u\partial_u + \omega_v\partial_v \right) \zeta_u - \left( \zeta_u\partial_u + \zeta_v\partial_v \right) \omega_u \right) \,.
\end{equation}

\subsection{Two non-trivial phase plane symmetries of separable models}
\label{ssec:separable_symmetries}
Our main theoretical result, established through the following two theorems, is the construction of two orthogonal, non-trivial symmetries for any separable model in the $(u,v)$ phase plane. 
\begin{theorem}[\textbf{Orthogonal phase plane symmetries of separable ODEs}]
Let the functions $f_u,f_v,g_u,g_v$ be given by Eq.~\eqref{eq:sep_reac}. Then the vector fields 
\begin{align}
    Y_u&=\frac{f_u(u)}{f_v(u)}\partial_u,\quad f_v(u)\neq 0,\label{eq:Y_u}\\
    Y_v&=\frac{g_v(v)}{g_u(v)}\partial_v,\quad g_u(v)\neq 0.\label{eq:Y_v}
\end{align}
generate two orthogonal symmetries of the phase plane ODE in Eq.~\eqref{eq:sep_reac}.
\label{thm:separable}
\end{theorem}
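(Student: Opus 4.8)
The plan is to verify directly that each of the proposed generators satisfies the linearised symmetry condition~\eqref{eq:sym_con_phase} for the separable phase plane ODE, and then to establish orthogonality from the explicit component structure of $Y_u$ and $Y_v$. For a generic generator $Y = \zeta_u\partial_u + \zeta_v\partial_v$, expanding the prolongation~\eqref{eq:prolonged_vector_J3} and applying $Y^{(1)}$ to $v' - \Omega$ gives the condition
\begin{equation}
\zeta_v^{(1)} - \zeta_u\partial_u\Omega - \zeta_v\partial_v\Omega = 0\,,\qquad \zeta_v^{(1)} = D_u\zeta_v - v'D_u\zeta_u\,,
\end{equation}
to be imposed on the solution manifold $v' = \Omega$, where for separable reaction terms~\eqref{eq:sep_reac} the right-hand side of~\eqref{eq:sys_phase_plane} factorises as $\Omega = f_v(u)g_v(v)/\left(f_u(u)g_u(v)\right)$. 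First I would substitute each candidate into this condition; the decisive simplification comes from the fact that the coefficient of $Y_u$ depends on $u$ alone and that of $Y_v$ on $v$ alone, which annihilates the mixed partial derivatives appearing in $\zeta_v^{(1)}$.

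For $Y_u$, where $\zeta_u = f_u/f_v$ and $\zeta_v = 0$, the prolonged infinitesimal reduces to $\zeta_v^{(1)} = -v'\,\partial_u\zeta_u$, so that on $v'=\Omega$ the condition collapses to $-\Omega\,\partial_u\zeta_u - \zeta_u\,\partial_u\Omega = -\partial_u(\Omega\zeta_u)$. The key step is then to observe that the product $\Omega\zeta_u = g_v(v)/g_u(v)$ is a function of $v$ only, so its $u$-derivative vanishes identically and the condition holds. Symmetrically, for $Y_v$, where $\zeta_u=0$ and $\zeta_v = g_v/g_u$, I would obtain $\zeta_v^{(1)} = v'\,\partial_v\zeta_v$ and, on $v'=\Omega$, the condition $\Omega\,\partial_v\zeta_v - \zeta_v\,\partial_v\Omega = \Omega^2\,\partial_v(\zeta_v/\Omega)$; here the ratio $\zeta_v/\Omega = f_u(u)/f_v(u)$ depends on $u$ only, so its $v$-derivative vanishes and the condition is again satisfied. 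Both verifications thus hinge on the same structural feature: the separable factorisation of $\Omega$ is precisely what renders the surviving product (respectively, quotient) independent of the differentiation variable.

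Finally, orthogonality follows immediately because $Y_u$ has only a $\partial_u$ component and $Y_v$ only a $\partial_v$ component, so they are pointwise orthogonal with respect to the Euclidean metric on $M_2$. I do not anticipate a serious obstacle here, since the statement reduces to a direct computation; the only points requiring care are the correct bookkeeping of the prolongation formula together with the substitution $v'=\Omega$, and recognising the compact forms $-\partial_u(\Omega\zeta_u)$ and $\Omega^2\,\partial_v(\zeta_v/\Omega)$ that expose the cancellation.
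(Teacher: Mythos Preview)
Your proposal is correct and follows essentially the same strategy as the paper: a direct verification of the linearised symmetry condition for each generator, followed by the trivial orthogonality observation. The only organisational difference is that the paper first recasts the symmetry condition~\eqref{eq:sym_con_phase} in terms of the full reaction terms $\omega_u,\omega_v$ (rather than their ratio $\Omega$) as
\[
\omega_u\!\left(\left(\omega_u\partial_u+\omega_v\partial_v\right)\zeta_v-\left(\zeta_u\partial_u+\zeta_v\partial_v\right)\omega_v\right)=\omega_v\!\left(\left(\omega_u\partial_u+\omega_v\partial_v\right)\zeta_u-\left(\zeta_u\partial_u+\zeta_v\partial_v\right)\omega_u\right),
\]
and then simply asserts that a straightforward calculation confirms both candidates satisfy it. Your route, working directly with $\Omega$ and packaging the residuals as $-\partial_u(\Omega\zeta_u)$ and $\Omega^2\partial_v(\zeta_v/\Omega)$, is equivalent but more transparent: it makes explicit \emph{why} the separable factorisation forces the cancellation, whereas the paper leaves this inside the unexhibited ``straightforward calculation''.
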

\begin{proof}
The linearised symmetry condition in Eq.~\eqref{eq:sym_con_phase} is equivalent to
\begin{equation}
\label{eqn:sym_con_phase_comp}
    \omega_u \left( \left( \omega_u\partial_u + \omega_v\partial_v \right) \zeta_v - \left( \zeta_u\partial_u + \zeta_v\partial_v \right) \omega_v \right) = \omega_v \left( \left( \omega_u\partial_u + \omega_v\partial_v \right) \zeta_u - \left( \zeta_u\partial_u + \zeta_v\partial_v \right) \omega_u \right) .
\end{equation}
A straightforward calculation shows that the infinitesimals $\zeta_u(u,v)=f_u(u)/f_v(u)$, $\zeta_v(u,v)=0$ of $Y_u$ and $\zeta_u(u,v)=0$, $\zeta_v(u,v)=f_v(v)/f_u(v)$ of $Y_v$ both satisfy Eq.~\eqref{eqn:sym_con_phase_comp}, and the generators $Y_u$ and $Y_v$ are clearly orthogonal.
\end{proof}

It is well-known that separable first order ODEs have a unidirectional symmetry in the independent variable~\cite{cheb2003first,stephani1989differential}. However, we are free to choose how to parameterise the phase plane by considering either $u$ or $v$ as the independent variable, and the symmetries of the phase plane ODE are independent of this choice~\cite{ohlsson2022geometry}. Based on this observation, Theorem 1 extends the results of~\cite{cheb2003first,stephani1989differential} to include a symmetry generator $Y_v$ in the dependent coordinate direction.

\begin{theorem}[\textbf{Non-triviality of phase plane symmetries of separable ODEs}]
\label{thm:interpretation_separable_symmetries}
The symmetry transformations $\Gamma_{2,\epsilon}^{u}$ and $\Gamma_{2,\epsilon}^{v}$ generated by $Y_u$ in Eq.~\eqref{eq:Y_u} and $Y_v$ in Eq.~\eqref{eq:Y_v}, respectively, are non-trivial and act on the internal energy $H$ in Eq.~\eqref{eq:internal_energy} according to
\begin{equation}
\label{eq:Gamma_u_v}
    \Gamma_{2,\epsilon}^{u}:H \mapsto H-\epsilon, \quad \Gamma_{2,\epsilon}^{v}:H \mapsto H+\epsilon.
\end{equation}
\end{theorem}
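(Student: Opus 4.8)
The plan is to exploit the fact that the internal energy $H$ in Eq.~\eqref{eq:internal_energy} is, up to sign, a canonical coordinate for each generator, which collapses both claims---the explicit action on $H$ and non-triviality---onto a single short computation. First I would apply each generator directly to $H$, using that $H$ separates into a $v$-dependent piece $\int (g_u/g_v)\,\mathrm{d}v$ and a $u$-dependent piece $-\int (f_v/f_u)\,\mathrm{d}u$. Since $Y_u=(f_u/f_v)\partial_u$ annihilates the $v$-piece, the chain rule gives
\begin{equation}
Y_u H = \frac{f_u(u)}{f_v(u)}\,\partial_u H = \frac{f_u(u)}{f_v(u)}\left(-\frac{f_v(u)}{f_u(u)}\right) = -1,
\end{equation}
and symmetrically, since $Y_v=(g_v/g_u)\partial_v$ annihilates the $u$-piece, $Y_v H = (g_v/g_u)\cdot(g_u/g_v)=1$.

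Next, I would promote these infinitesimal statements to the finite action in Eq.~\eqref{eq:Gamma_u_v}. Because $Y_u H$ and $Y_v H$ are \emph{constant}, the value of $H$ along an integral curve of either generator obeys a trivially integrable ODE: parametrising the flow by $\epsilon$, one has $\mathrm{d}H/\mathrm{d}\epsilon = Y_u H = -1$ for $\Gamma_{2,\epsilon}^{u}$ and $\mathrm{d}H/\mathrm{d}\epsilon = Y_v H = +1$ for $\Gamma_{2,\epsilon}^{v}$, which integrate to $H \mapsto H-\epsilon$ and $H \mapsto H+\epsilon$ respectively. Equivalently, one may observe that $s=-H$ satisfies $Y_u s = 1$ and $s=H$ satisfies $Y_v s = 1$, so each is a canonical coordinate and the translation $s\mapsto s+\epsilon$ of Eq.~\eqref{eq:symmetry_canonical} reproduces the same conclusion.

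Finally, non-triviality follows essentially for free: the constant $H$ labels the solution trajectories of Eq.~\eqref{eq:sys_phase_plane}, and since each transformation sends the level set $\{H=c\}$ to the distinct level set $\{H=c\mp\epsilon\}$ for $\epsilon\neq 0$, the symmetries genuinely permute the space of solutions rather than fixing each trajectory pointwise. I do not expect a serious obstacle here; the only points that warrant care are confirming that the cross terms vanish (that the $v$-part of $H$ is annihilated by $Y_u$, and the $u$-part by $Y_v$), and noting explicitly that it is the \emph{constancy} of $Y_u H$ and $Y_v H$ that makes the passage from the infinitesimal generator to the finite flow immediate, rather than requiring an explicit exponential map.
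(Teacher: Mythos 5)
Your proposal is correct and follows essentially the same route as the paper: the paper computes the canonical coordinates $s_u=\int (f_v/f_u)\,\mathrm{d}u$ and $s_v=\int (g_u/g_v)\,\mathrm{d}v$, writes $H=s_v-s_u$, and reads off the translation, which is exactly your observation that $Y_uH=-1$ and $Y_vH=+1$ integrated along the flow. The non-triviality argument via $H$ parametrising the solution space is also identical.
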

\begin{proof}
The canonical coordinates of $\Gamma_{2,\epsilon}^{u}$ and $\Gamma_{2,\epsilon}^{v}$, respectively, are given by
\begin{equation}
    s_u=\int\frac{f_v(u)}{f_u(u)}\mathrm{d} u,\quad r_u=v, \quad \mathrm{and} \quad 
    s_v=\int\frac{g_u(v)}{g_v(v)}\mathrm{d} v,\quad r_v=u.
\end{equation}
Consequently, the internal energy $H$ in Eq.~\eqref{eq:internal_energy} of any phase plane trajectory can be expressed as
  \begin{equation}
      H=s_v-s_u\,,
      \label{eq:internal_energy_canonical}
  \end{equation}
and the claims of the theorem follow since $H$ parametrises the solution space of Eq.~\eqref{eq:sys_phase_plane}.
\end{proof}

According to Eq.\eqref{eq:Gamma_u_v} the orthogonal symmetries act as translations of the internal energy and are consequently related to the Hamiltonian formulation of separable models. In particular, there is a single family of solutions parameterised by the energy, and $\Gamma^u_{2,\epsilon}$ and $\Gamma^u_{2,\epsilon}$ act on this one-dimensional space by raising and lowering the energy.

For separable models and the non-trivial symmetry generators in Eqs.~\eqref{eq:Y_u} and \eqref{eq:Y_v} the lifting condition in Eq.~\eqref{eq:lifting_condition} generally simplifies. In particular, for the case $h(u,v)=1$ which is common in applications, the lifting condition reduces to
\begin{equation}
    \partial_t \xi + \omega_u \partial_u \xi + \omega_v \partial_v \xi =  f_u(u) \partial_u \! \left( \frac{1}{f_v(u)} \right) + g_v(v) \partial_v \! \left( \frac{1}{g_u(v)} \right) \,.
\end{equation}

\section{Application to biological models}
We will now use the phase plane symmetries discussed in the previous section to analyse three separable models in biology: a canonical oscillator model~\cite{murray2002}, the Lotka--Volterra (LV) model~\cite{lotka1920undamped, lotka1925elements, volterra1926variations}, and the epidemiological SIR model~\cite{SIR-model}. In all examples, we extract the orthogonal symmetry generators in the phase plane and use them to describe the action on biologically relevant quantities characterising the solution trajectories. Furthermore, we compute the lift to the time domain which gives non-trivial symmetries of the original two-state model that act by translations on the internal energy and are therefore biologically interpretable. All of the plots can be regenerated using the open-source github repository associated with this article (see \url{https://github.com/JohannesBorgqvist/separable_phase_plane_symmetries}).

\subsection{A canonical oscillator model}
Let $u(t)$ and $v(t)$ be the states at time $t$ of the canonical oscillator model whose dynamics is governed by~\cite{murray2002}
\begin{equation}
        \frac{\mathrm{d}u}{\mathrm{d}t}=u\left(1-\sqrt{u^2+v^2}\right)-\lambda v\,,\quad\frac{\mathrm{d}v}{\mathrm{d}t}=v\left(1-\sqrt{u^2+v^2}\right)+\lambda u,
    \label{eq:biological_oscillator}
\end{equation}
where $\lambda$ is a positive constant describing the angular frequency of the oscillator. In terms of the polar coordinates $(\sigma,\theta)$ defined by
\begin{equation}
    u=\sigma\cos\theta\,,\quad v=\sigma\sin\theta\,,
\label{eq:polar_coordinates}
\end{equation}
the model becomes separable 
\begin{equation}
\label{eq:oscillator}
    \frac{\mathrm{d}\theta}{\mathrm{d}t}=\lambda,\quad\frac{\mathrm{d}\sigma}{\mathrm{d}t}=\sigma(1-\sigma),
\end{equation}
and the corresponding $(r,\theta)$ phase plane ODE 
\begin{equation}
\label{eq:oscillator_phase}
    \frac{\mathrm{d}\sigma}{\mathrm{d}\theta}=\frac{\sigma(1-\sigma)}{\lambda},
\end{equation}
can be solved to produce the internal energy $H_{\mathrm{Osc}} = \ln \sigma-\ln|1-\sigma|-\theta/\lambda$. 

The two orthogonal phase plane symmetries in Eq.~\eqref{eq:Gamma_u_v} are generated by
\begin{equation}
Y_{\theta}^{\mathrm{Osc}}=\lambda\partial_{\theta},\quad Y_{\sigma}=\sigma(1-\sigma)\partial_\sigma,
\label{eq:oscillator_symmetries}
\end{equation}
with corresponding canonical coordinates $(s_\theta,r_\theta)=(\theta/\lambda,\sigma)$ and $(s_\sigma,r_\sigma)=(\ln\sigma - \ln|1-\sigma|,\theta)$. In addition to the action on the internal energy $H_{\mathrm{Osc}}$ in Eq.~\eqref{eq:Gamma_u_v}, the integral curves for the generators can also be used to describe the transformations of the states themselves
\begin{equation}
    \Gamma_{2,\epsilon}^{\mathrm{Osc},\theta}: (\theta,\sigma) \mapsto \left(\theta + \lambda\epsilon,\sigma\right)\,,
\end{equation}
\begin{equation}
    \Gamma_{2,\epsilon}^{\mathrm{Osc},\sigma}: (\theta,\sigma) \mapsto \left(\theta,\frac{1}{1+\left(\frac{1}{\sigma}-1\right)e^{-\epsilon}}\right)\,.
\end{equation}

All solutions of the oscillator model have the same frequency and are related by a constant shift in the angular coordinate $\theta$, or equivalently a translation in time, implying that the only qualitative distinction between solutions is whether they lie inside ($\sigma<1$) or outside ($\sigma>1$) the limit cycle (see \cite{ohlsson2022geometry} for an in-depth discussion). Consequently, the action of the symmetries $\Gamma_{2,\epsilon}^{\mathrm{Osc},\theta}$ and $\Gamma_{2,\epsilon}^{\mathrm{Osc},\sigma}$ is a rotation of the phase plane trajectory, or equivalently, a reparametrisation of the phase plane.

This fact is also reflected in the time domain symmetries obtained by solving the lifting condition in Eq.~\eqref{eq:lifting_condition}
\begin{align}
    X_\theta^{\mathrm{Osc}}&=F_{\theta}\left(H_{\mathrm{Osc}}\right)\partial_t+\lambda\partial_\theta\,,\label{eq:osc_lift_angle}\\X_\sigma^{\mathrm{Osc}}&=F_{\sigma}\left(H_{\mathrm{Osc}}\right)\partial_t+\sigma(1-\sigma)\partial_\sigma\,,\label{eq:osc_lift_radius}
\end{align}
where $F_{\theta}$ and $F_{\sigma}$ are two arbitrary differentiable functions. The time infinitesimals are constant on each trajectory making the equivalence to shifting solutions in time manifest.

The solutions and the actions of the symmetries in the phase plane and time domain for the biological oscillator model are illustrated in Fig. \ref{fig:oscillator}.

\begin{figure}[ht!]
    \centering
    \includegraphics{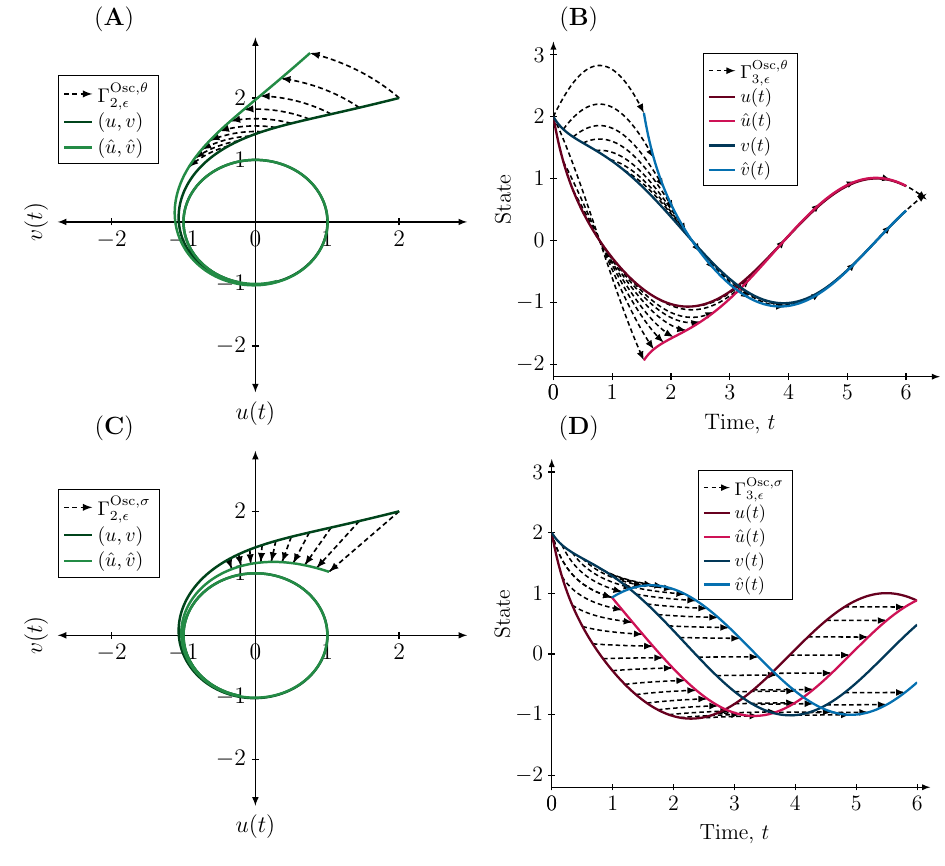}
    \caption{\textit{Symmetries of a canonical oscillator model}. Original and transformed solution curves with $\lambda=1$ illustrated in four cases: (\textbf{A}) the symmetry $\Gamma^{\mathrm{Osc},\theta}_{2,\epsilon}$ in the $(u,v)$ phase plane with $\epsilon=\pi/6$ and (\textbf{B}) the symmetry $\Gamma^{\mathrm{Osc},\theta}_{3,\epsilon}$ with $F_{\theta}(x)=1$ and $\epsilon=\pi/6$ in the time domain, (\textbf{C}) the symmetry $\Gamma^{\mathrm{Osc},\sigma}_{2,\epsilon}$ in the $(u,v)$ phase plane with $\epsilon=0.75$, (\textbf{D}) the symmetry $\Gamma^{\mathrm{Osc},\sigma}_{3,\epsilon}$ in the time domain with $F_{\sigma}(x)=1$ and  $\epsilon=0.75$. }
    \label{fig:oscillator}
\end{figure}

\subsection{The Lotka--Volterra model}
The dimensionless LV model~\cite{lotka1920undamped, lotka1925elements, volterra1926variations} is given by
\begin{equation}
\frac{\mathrm{d}u}{\mathrm{d}t}=u(1-v),\quad\frac{\mathrm{d}v}{\mathrm{d}t}=\alpha v(u-1),\quad u(t),v(t)\geq 0,
\label{eq:LV}
\end{equation}
where $u(t)$ and $v(t)$ correspond, respectively, to the number of prey and predators in the population at time $t$, and $\alpha$ is a rate parameter. The corresponding $(u,v)$ phase plane ODE
\begin{equation}
\dfrac{\mathrm{d}v}{\mathrm{d}u}=\dfrac{\alpha v(u-1)}{u(1-v)},
  \label{eq:LV_phase_plane}
\end{equation}
is directly solvable, yielding the internal energy\footnote{To conform to the standard treatment of the Lotka--Volterra model we have used the ambiguity in the definition of the integration constant in Eq.~\eqref{eq:internal_energy} to redefine the internal energy to be a manifestly positive quantity. Consequently, the action of the generators $Y_u^{\mathrm{LV}}$ and $Y_v^{\mathrm{LV}}$ on $H_{\mathrm{LV}}$ in Eq.~\eqref{eq:Gamma_u_v} are reversed.} $H_{\mathrm{LV}}=\alpha (u-\ln u)+v-\ln v$. 

The two phase plane symmetry generators in Eqs.~\eqref{eq:Y_u} and \eqref{eq:Y_v} are given by
\begin{equation}
\label{eq:Y_LV}
    Y^{\mathrm{LV}}_u=\dfrac{1}{\alpha}\left(\dfrac{u}{u-1}\right)\partial_u,\quad Y^{\mathrm{LV}}_v=\dfrac{v}{1-v}\partial_v,
\end{equation}
with corresponding canonical coordinates $(s_u,r_u)=(\alpha (u-\ln u),v)$ and $(s_v,r_v)=(\ln v-v,u)$. The point transformations in the $(u,v)$ phase plane generated by $Y_u^{\mathrm{LV}}$ and $Y_v^{\mathrm{LV}}$ are found through the integral curves as
\begin{equation}
\label{eq:Gamma_u_LV}
    \Gamma_{2,\epsilon}^{\mathrm{LV},u}: (u,v) \mapsto \left(-W\left[-\exp\left(-\frac{\epsilon}{\alpha} + \ln u - u\right)\right],v\right) \,,
\end{equation}
\begin{equation}
\label{eq:Gamma_v_LV}
    \Gamma_{2,\epsilon}^{\mathrm{LV},v}: (u,v) \mapsto \left(u,-W\left[-\exp\left(\epsilon + \ln v - v\right)\right]\right) \,,
\end{equation}
where $W$ is the Lambert W function (see~\cite{corless1996lambertw,lehtonen2016Lambert} for recent overviews of the Lambert W function and its uses in biological modelling). The properties of $W$ determine the domain and range of the transformations in Eqs.~\eqref{eq:Gamma_u_LV} and \eqref{eq:Gamma_v_LV}. In particular, the integral curves are restricted by the singularities in $Y_u^{\mathrm{LV}}$ and $Y_v^{\mathrm{LV}}$ in Eq.~\eqref{eq:Y_LV}, meaning that for a fixed transformation parameter $\epsilon$ there is either a segment of the original solution where the transformation is not defined or a segment of the transformed solution which is not reached by the transformation (see Fig.~\ref{fig:LV}).

In contrast to the previous case of the biological oscillator, not all solutions to the LV model are related through time translations. Instead, the inequivalent solutions can be characterized by the minimum and maximum populations of the species $u$ and $v$. These extrema, corresponding to the intersection of the solution trajectory and the phase plane nullclines $u=1$ and $v=1$, can be expressed in terms of the internal energy as
\begin{equation}
    u_{\min} = -W_0\left[ -\exp\left( \frac{1-H_{\mathrm{LV}}}{\alpha}\right)\right] \,, \quad u_{\max} = -W_{-1}\left[ -\exp\left( \frac{1-H_{\mathrm{LV}}}{\alpha}\right)\right] \,,
\end{equation}
\begin{equation}
    v_{\min} = -W_0\left[ -\exp\left( \alpha-H_{\mathrm{LV}} \right)\right] \,, \quad v_{\max} = -W_{-1}\left[ -\exp\left( \alpha-H_{\mathrm{LV}} \right)\right] \,,
\end{equation}
where $W_0$ and $W_{-1}$ are the real branches of the Lambert W function. Consequently, the action in Eq.~\eqref{eq:Gamma_u_v} provides a direct way to interpret the action of $\Gamma_u^{\mathrm{LV}}$ and $\Gamma_v^{\mathrm{LV}}$ in terms of biologically meaningful quantities.

Lifting the phase plane symmetry generators of the LV model to the time domain yields
\begin{align}
X^{\mathrm{LV}}_u &= - \left( \int_{u_0}^u \frac{\mathrm{d}z}{\alpha(z-1)^2 \left( 1 + W\left[-\exp\left(\alpha\left(z-\ln z\right) - H_{\mathrm{LV}}\right)\right] \right)} \right) \partial_t + \frac{1}{\alpha} \left( \dfrac{u}{u-1} \right) \partial_u \,,\\
X^{\mathrm{LV}}_v &= - \left( \int_{v_0}^v \frac{\mathrm{d}z}{\alpha(z-1)^2 \left( 1 + W\left[-\exp\left(\frac{1}{\alpha} \left(z-\ln z - H_{\mathrm{LV}}\right)\right)\right]\right)} \right) \partial_t + \frac{v}{1-v}\partial_v \,.
\end{align}
The time infinitesimals encode the local transformations in time required to construct a map between two local segments of the solution trajectories. In contrast to the oscillator model this transformation is not simply a constant translation for the entire solution trajectory, but depends non-trivially on the states. The fact that this symmetry is not obvious from the original time domain formulation in Eq.~\eqref{eq:LV} illustrates the power of the phase plane symmetry analysis.

The solutions and actions of the symmetries in the phase plane and the time domain for the LV model are illustrated in Fig. \ref{fig:LV}.


\begin{figure}[htbp!]
    \centering
    \includegraphics{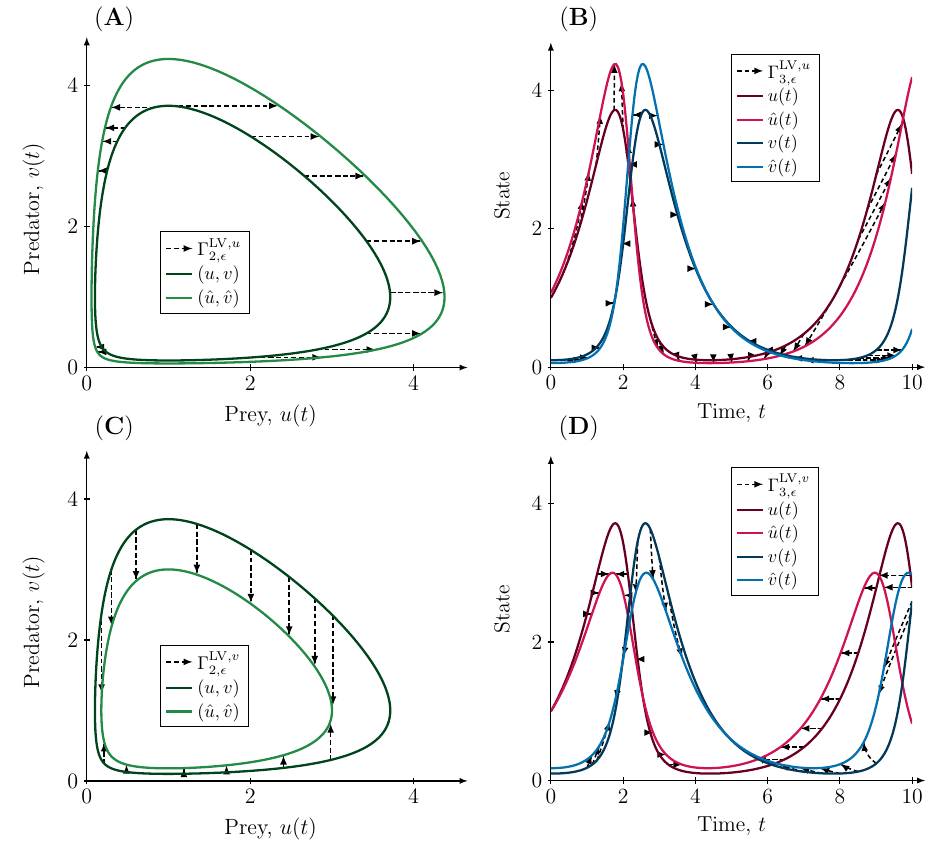}
    \caption{\textit{Symmetries of the LV model}. The original solution curves defined by $\alpha=1$ are transformed with $\epsilon=0.5$ in order to produce transformed solution curves which are illustrated in four cases: (\textbf{A}) the symmetry $\Gamma^{\mathrm{LV},u}_{2,\epsilon}$ in the $(u,v)$ phase plane, (\textbf{B}) the symmetry $\Gamma^{\mathrm{LV},u}_{3,\epsilon}$ in the time domain, (\textbf{C}) the symmetry $\Gamma^{\mathrm{LV},v}_{2,\epsilon}$ in the $(u,v)$ phase plane, and (\textbf{D}) the symmetry $\Gamma^{\mathrm{LV},v}_{3,\epsilon}$ in the time domain.}
    \label{fig:LV}
\end{figure}

\subsection{The SIR model}
The original SIR model~\cite{SIR-model} describes the spread of a disease in a population and can be reduced to a dimensionless model for the sub-populations of susceptible, $S(t)$, and infected, $I(t)$, individuals at time $t$
\begin{equation}
  \frac{\mathrm{d}S}{\mathrm{d}t}=-SI\,,\quad\frac{\mathrm{d}I}{\mathrm{d}t}=I(S-\rho)\,,\quad I(t),S(t)\geq 0\,,
  \label{eq:SIR}
\end{equation}
where $\rho$ is a recovery rate parameter. The corresponding $(S,I)$ phase plane ODE
\begin{equation}
    \frac{\mathrm{d}I}{\mathrm{d}S}=\frac{\rho-S}{S}\,,
\label{eq:SI_phase_plane}
\end{equation}
can be integrated according to Eq.~\eqref{eq:internal_energy} to an expression for the internal energy $H_{\mathrm{SIR}} = I + S - \rho\ln S$.

The two phase plane symmetry generators in Eqs.~\eqref{eq:Y_u} and \eqref{eq:Y_v} are given by
\begin{equation}
  Y^{\mathrm{SIR}}_S=\left(\frac{S}{\rho-S}\right)\partial_S \,, \quad Y^{\mathrm{SIR}}_I=\partial_I \,,
  \label{eq:SIR_symmetries}
\end{equation}
with canonical coordinates $(s_S,r_S)=(\rho\ln S-S,I)$ and $(s_I,r_I)=(I,S)$, and the corresponding point transformations are
\begin{equation}
\label{eq:Gamma_S_SIR}
    \Gamma_{2,\epsilon}^{\mathrm{SIR},S}: (S,I) \mapsto \left(-\rho W\left[-\frac{1}{\rho}\exp\left(\frac{1}{\rho}\left(\epsilon + \rho \ln S - S\right)\right)\right] , I\right) \,,
\end{equation}
\begin{equation}
\label{eq:Gamma_I_SIR}
    \Gamma_{2,\epsilon}^{\mathrm{SIR},I}: (S,I) \mapsto \left(S,I+\epsilon\right) \,.
\end{equation}
Just as for the LV model, the limitations on the range and domain for the Lambert W function implies that the transformation $\Gamma_{2,\epsilon}^{\mathrm{SIR},S}$ is not one-to-one for the entire solution trajectory and its transform. In addition, while there is no singularity in the integral curve for the generator $Y_I^{\mathrm{SIR}}$, the transformation $\Gamma_{2,\epsilon}^{\mathrm{SIR},I}$ must clearly be restricted in order for the states to remain in the physical range.

The solution trajectory in Eq.~\eqref{eq:SI_phase_plane} has a maximum at $S=\rho$ given by
\begin{equation}
    I_{\max}=H_{\mathrm{SIR}}-\left(\rho-\rho\ln \rho \right),
  \end{equation}
which implies that the symmetries generated by $Y_S$ and $Y_I$ can be interpreted as, respectively, decreasing and increasing the maximum number of infected individuals. In addition, the maximum and minimum of susceptible individuals is obtained from the intersection with the stable locus $I=0$ as
\begin{equation}
    S_{\min} = -\rho W_0 \left[ -\frac{1}{\rho} \exp\left( -\frac{H_{\mathrm{SIR}}}{\rho}\right) \right]\,, \quad S_{\max} = -\rho W_{-1} \left[ -\frac{1}{\rho} \exp\left( -\frac{H_{\mathrm{SIR}}}{\rho}\right) \right] \,.
\end{equation}
Similarly to the case for the LV model, these expressions relate the translation in internal energy $H_{\mathrm{SIR}}$ to quantities of direct biological relevance.

Solving the lifting condition for $Y^{\mathrm{SIR}}_S$ and $Y^{\mathrm{SIR}}_I$, we obtain the following vector fields as generators of symmetries in the time domain 
\begin{align}
X^{\mathrm{SIR}}_S &= - \left( \int_{S_0}^S \frac{\mathrm{d}z}{(\rho-z)^2(H_{\mathrm{SIR}}-z+\rho\ln z)} \right) \partial_t + \left(\frac{S}{\rho-S}\right) \partial_S \,,\\
X^{\mathrm{SIR}}_I &= - \left( \bigintsss_{I_0}^I \frac{\mathrm{d}z}{\rho z^2 \left(1+W\left[ - \frac{1}{\rho} \exp \left( \frac{1}{\rho} (z  -H_{\mathrm{SIR}}) \right)\right]\right)} \right) \partial_t+\partial_I,\label{eq:SIR_I_lift}
\end{align}
which as in case of the LV model contain time infinitesimals which are highly non-trivial to deduce directly from the time domain system in Eq.~\eqref{eq:SIR}.

The action of the symmetries of the SIR model in both the phase plane and the time domain is illustrated in Fig. \ref{fig:SIR}. 

\begin{figure}[ht!]
    \centering
    \includegraphics{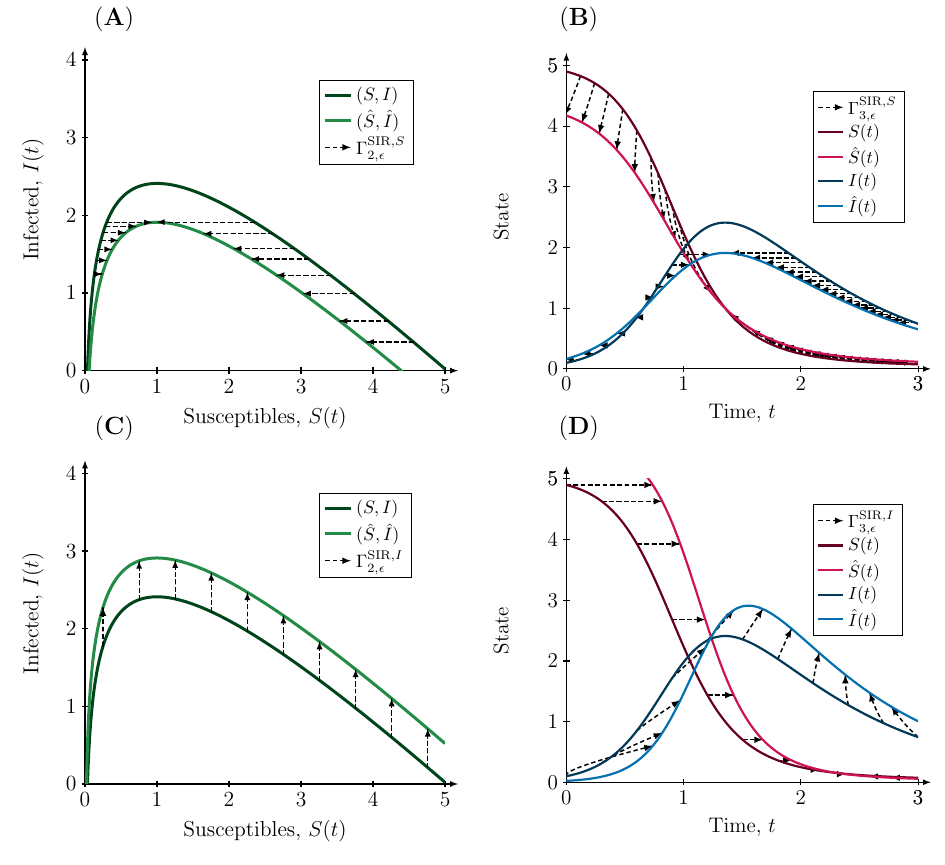}
    \caption{\textit{Symmetries of the SIR model}. The original solution curves defined by $\rho=1$ are transformed with $\epsilon=0.5$ in order to produce transformed solution curves which are illustrated in four cases: (\textbf{A}) the symmetry $\Gamma^{\mathrm{SIR},S}_{2,\epsilon}$ in the $(S,I)$ phase plane, (\textbf{B}) the symmetry $\Gamma^{\mathrm{SIR},S}_{3,\epsilon}$ in the time domain, (\textbf{C}) the symmetry $\Gamma^{\mathrm{SIR},I}_{2,\epsilon}$ in the $(S,I)$ phase plane and (\textbf{D}) the symmetry $\Gamma^{\mathrm{SIR},I}_{3,\epsilon}$ in the time domain.}
    \label{fig:SIR}
\end{figure}

\section{Discussion and conclusions}
The strength of this work lies in (i) the development of a methodology for finding energy translation symmetries, and (ii) the interpretation of the action of these symmetries. On the latter point, the theoretical result showing that the identified symmetries translate the internal energy of solution curves in the phase plane enables us to interpret their action in terms of physically meaningful quantities. We exemplify this result using three different biological models, where we demonstrate what it means to alter the internal energy in practice. For example, for the SIR model we show that altering the internal energy amounts to changing the maximum number of infected individuals during an epidemic. On the first point, the technique we present for finding energy translation symmetries constitutes a straightforward methodology for solving the normally undetermined linearised symmetry conditions. It is well-known within the analysis of differential equations that the dimension of the symmetry group of systems of first order ODEs is infinite (see~\cite{nucci2008lie} for a discussion on this). The challenge, then, is to find informative symmetries in the absence of a general strategy for solving the underdetermined linearised symmetry conditions. In the case of autonomous models, it is always possible to reformulate the original system consisting of, say, $n$ equations as a system of $n-1$ equations by considering one of the states as the independent variable (see the discussion of complete symmetry groups of second order systems in~\cite{nucci1996complete}). Using this result, combined with the recently derived lifting condition~\cite{ohlsson2022geometry}, we demonstrate how the action of the energy translation symmetries in the phase plane is realised in the time domain. Consequently, for the class of models defined by Eqs.~\eqref{eq:sys_auto} and \eqref{eq:sep_reac}, we have proposed and implemented a straightforward methodology for solving the normally difficult problem of finding Lie symmetries in the time domain. Nevertheless, it is important to emphasise that this methodology is restricted to this particular type of system of first order ODEs.

The main limitation of the methodology in the present work is that the scope is restricted to autonomous models with separable reaction terms. Autonomy gives rise to a closed-form phase plane representation. Separability then allows us to immediately integrate the phase plane equation, and obtain two symmetries that can be readily understood in terms of altering the internal energy of the system. However, there are numerous examples, with applications in both physics and biology, of systems that are not immediately separable.
In such cases, we must first find phase plane symmetries by solving the linearised symmetry condition for the phase plane ODE in Eq.~\eqref{eq:sym_con_phase}. As discussed previously, solving this linearised symmetry condition remains a challenging problem in general as the problem is ill-posed.

In light of this difficulty, an interesting extension of this work would be to attempt to solve the linearised symmetry condition for the case of non-separable reaction terms. The most straightforward approach is to design ans\"{a}tze for the unknown infinitesimals $\zeta_u$ and $\zeta_v$ solving the linearised symmetry condition in Eq.~\eqref{eq:sym_con_phase}. A set of ans\"{a}tze for these infinitesimals for a large class of single first order ODEs has been previously designed~\cite{cheb2003first}, which serves as a natural staring point for this endeavour. Given a solution of the linearised symmetry condition for the phase plane ODE, the analysis presented in this work can be readily extended to the case of non-separable reaction terms. Moreover, the Hamiltonian structure of, e.g., the LV model has been considered before~\cite{nutku1990hamiltonian}. In future work, we intend to extend such an analysis to general separable systems based on the results presented in the present paper. In addition, an interesting avenue of investigation is to consider the algebraic structure of the space of solutions endowed by the symmetry generators $X_u$ and $X_v$ in Eqs.~\eqref{eq:Y_u} and~\eqref{eq:Y_v}.

In order to show how the theoretical results developed in Section~\ref{sec:orthogonal_symmetries} can be applied to analyse the structure and dynamics of separable models, we consider three examples in detail. In particular, for the LV and SIR models we show how the orthogonal energy translation symmetries in Eq.~\eqref{eq:Gamma_u_v} act on the space of solutions. The fact that this action is transitive for all separable models, since the space of solutions is parameterised by the internal energy, means that we obtain a complete characterisation of the dynamics of the model in terms of any one solution and the corresponding generators.

Furthermore, using the ability to extend symmetries from the phase plane to the time domain we also derive analytic expressions for time domain symmetries that act by translation of the internal energy for the three examples we consider. In the case of the LV and SIR models these transformations are highly non-trivial, and, intractable to obtain by designing suitable ans\"{a}tze. In the time domain, all autonomous models have an additional time translation symmetry generated by $X=\partial_t$, which correspond to reparameterisations of a solution trajectory which preserves its internal energy. This time translation generator, together with the lifted symmetry generators $X_u$ and $X_v$ obtained from Eqs.~\eqref{eq:Y_u} and \eqref{eq:Y_v}, generate a group that acts transitively on solutions in the time domain and can be used to completely characterize the corresponding dynamics.

From a biological modelling perspective, we would finally like to emphasise the benefit of the ability, afforded by the structure of the space of solutions of separable models, to conduct a complete analysis of the separable models in terms of the internal energy and interpret the corresponding translational symmetries and their action on biologically meaningful quantities. In contrast, any symmetry of the model may be used to, e.g., compute exact solutions but the action on solutions will generally be biophysically obscure.



\section{Acknowledgements}
JGB would like to thank the Wenner--Gren Foundation for a Research Fellowship and Linacre College, Oxford, for a Junior Research Fellowship. FO would like to thank the Wolfson Centre for Mathematical Biology for hospitality and the Kempe Foundation for financial support during the conception of this work.

\section{Author contributions}
All three authors conceptualised the work, analysed the results and wrote the paper. JGB and FO formulated and proved the two theorems, constructed the examples and designed the figures.

\bibliographystyle{unsrt}
\bibliography{phase_plane_symmetry}
\end{document}